\documentclass[a4paper,11pt,DIV=calc]{article}

\usepackage[a4paper,left=30mm,right=30mm,top=30mm,bottom=30mm,marginpar=23mm]{geometry} 


\usepackage{hyperref}
\usepackage{subfig}
\usepackage{amsmath,amssymb}
\usepackage{bbm}
\usepackage{amsthm}
\usepackage{pgfplots}
\usepackage{graphicx}
\usepackage{MnSymbol}
\usepackage{color}
 \theoremstyle{plain} \newtheorem{Theorem}{Theorem}
\theoremstyle{plain} 
\theoremstyle{plain} \newtheorem{Lemma}{Lemma}
\theoremstyle{plain} 
\theoremstyle{definition} \newtheorem{Def}{Definition}
\theoremstyle{definition}\newtheorem{Remark}{Remark}
\newcommand{\dist}{\operatorname{dist}}

\newcommand{\id}{{\mathbb{I}}}

\newcommand{\eps}{\varepsilon}

\newcommand{\rk}{\operatorname*{rank}}

\newcommand{\onn}{\mbox{ on }}

\newcommand{\fall}{\mbox{ for all }}

\newcommand{\om}{\Omega}

\newcommand{\so}{{\rm{SO(3)}}}

\newcommand{\rr}{\mathbb{R}^{3\times 3}}
\newcommand{\R}{{\mathbb{R}^3}}

\newcommand{\tp}{\otimes}

\mathchardef\mdash="2D

\newcommand{\PP}{\mathcal{P}^{24}}

\newcommand{\qc}{{\mbox{\scriptsize{qc}}}}
\usepackage{framed}
\usepackage{authblk}

\usepackage{fancyhdr}

\newcommand\shorttitle{Interior nucleation of Martensite}
\newcommand\authors{A. Muehlemann}

\fancyhf{}

\fancyhead[C]{%
\ifodd\value{page}
  \small\scshape\authors
\else
  \small\scshape\shorttitle
\fi
}
\pagestyle{fancy}

\begin{document}

\makeatletter
\def\@maketitle{%
  \newpage
  \null
  \vskip 2em%
  \begin{center}%
  \let \footnote \thanks
    {\Large\bf  \@title \par}%
    \vskip 1.5em%
    {\normalsize
      \lineskip .5em%
      \begin{tabular}[t]{c}%
        \@author
      \end{tabular}\par}%
    \vskip 1em%
    {\normalsize \@date}%
  \end{center}%
  \par
  \vskip 1.5em}
\makeatother
\title{{Interior Nucleation of Martensite in a Cubic-to-Tetragonal Phase Transformation}}

\author{Anton Muehlemann%
  \thanks{\texttt{muehlemann@maths.ox.ac.uk}}}
\affil{\small\textit{Mathematical Institute, University of Oxford,}\\\textit{ Andrew Wiles Building, Radcliffe Observatory Quarter, Woodstock Road,} \\ \textit{Oxford OX2 6GG, United Kingdom}}

\date{Dated: \today}

\maketitle

%
%
%
%
%
 \hypersetup{
  pdfauthor = {Anton Muehlemann (University of Oxford)},
  pdftitle = {Interior Nucleation of Martensite in a Cubic-to-Tetragonal Phase Transformation},
  pdfsubject = {MSC (2010): 74N05, 74N15},
  pdfkeywords = {lath martensite, microstructure, twins within twins, (557) habit planes, double shear theories, energy minimisation, non-classical interfaces}
}

\begin{abstract}
\noindent Using a variational model based on non-linear elasticity we investigate whether in a cubic-to-tetragonal phase transformation it is energetically preferable to nucleate martensite within austenite. Under minimal growth assumptions on the free energy density $W$ away from the wells, we derive explicit upper bounds on $W^{qc}({\mathbb{I}})$, i.e. on the macroscopic free energy density of a region that has been macroscopically deformed by the identity map. The bounds only depend on material parameters, the temperature difference and the growth of $W$ away from the wells. By comparing $W^{qc}({\mathbb{I}})$ and $W({\mathbb{I}})$ we conclude that nucleation is always energetically preferable and are able to provide quantitative upper bounds for the gain in negative energy due to nucleation. 
\vspace{4pt}

\noindent\textsc{MSC (2010): 74A50, 74N05, 74N15} 

\noindent\textsc{Keywords:} martensitic transformation, inclusions, cubic-to-tetragonal, austenite martensite interfaces, variational model, non-linear elasticity, self-accommodation, twinned martensite

\vspace{4pt}
\end{abstract}

\section{Introduction}
The high growth velocity of heterogeneous temperature induced  nucleation of martensite, which is of the order of magnitude of sound (\cite{Bunshah}, \cite[p.977~ff.]{Christian}), is one of the reasons why the process of nucleation is not yet satisfactory understood. As pointed out in \cite[Ch.~23]{Christian} ``the kinetics of martensitic transformations are depended mainly on the nucleation process as each plate grows rapidly to its limiting size'' and as a consequence ``it is often found that the plates of martensite formed (...) have a constant lenght:thickness ratio''  until ``growth ceases because of plastic deformation''. Different models (e.g. \cite{Kostas,Cook,OlsonDislo,OlsonMartNuc}) have been proposed to explain the formation and predict properties of such plates. It has been shown theoretically and experimentally that the availability of a free surface can have a significant influence on the nucleation process. In \cite{Krauss} a single crystal of Fe-Ni was coated 
with a Ni-rich surface layer to inhibit nucleation from the free surface with the effect that the formation of martensite was completely suppressed during slow cooling. Applying a similar procedure to an In-Cd alloy however did not show a difference between the coated and uncoated crystal. In a related experiment on Cu-Al-Ni single- and polycrystals with varying grain-sizes it has been shown in \cite{Salz} that the energy needed for bulk nucleation is significantly higher than that required for corner nucleation or to move a planar interface. A single crystal of the same alloy was considered in \cite{Kostas} where it was shown experimentally and theoretically that corner nucleation is energetically preferable over interior nucleation. However if a free surface is not available, for instance because of a relatively large grain-size or by a coating as described above, it is of interest to establish criteria for interior nucleation. 

The present paper investigates the energetic properties of interior nucleation using a variational model based on non-linear elasticity first introduced in \cite{BallFine}. In contrast to \cite{OlsonMartNuc}, where a nucleus forms with an array of interfacial dislocation, in this model the material is assumed to deform purely elastically without considering plasticity. Further interfacial energy is not taken into account. This is justifiable by recent experiments in \cite{James} on Ti-Ni-Pd and Ti-Ni-Au crystals with very low hysteresis where it has been observed that structures with very complex interfaces form, even though less complex interfaces are already kinematically compatible, so that the contribution from interfacial energy must be negligible.

A particularly interesting case of interior nucleation is self-accommodation. Using the Ball-James (\cite{BallFine}) model it has been shown in \cite{Bhaself} that a necessary and sufficient condition for self-accommodation is that the austenite and martensite have equal volume. 
In this case the martensite can 
rearrange itself such that it matches the austenite along any surface and thus interior nucleation is energetically as preferable as nucleation from the boundary. However in general we expect to have a volume change during transformation, so that any inclusion of martensite within austenite necessarily has a region where it is neither on the austenite nor on the martensite wells. We will refer to this region as the interpolation layer and it can loosely be understood as the elastic equivalent to having interfacial energy and plasticity. By the classical theory of heterogeneous nucleation in order to transform a given region of austenite to martensite ``the negative free energy change resulting from the formation of a given volume of a more stable phase [the martensite] is opposed by a positive free energy change due to the creation of an interface [the interpolation layer] between the initial phase and the new phase'' (\cite[p.~4]{Christian}). Thus our criterion for interior nucleation is that the energy of the 
inclusion is lower than the 
energy of the parent phase, i.e. the austenite. It seems to have first been shown qualitatively in \cite{BallSummer} that for a sufficiently flat cylindrical inclusion nucleation is always energetically preferable over the pure austenite. In this paper we develop quantitative estimates on the energy difference between a nucleus and the austenite depending on the temperature and the assumed growth of the free energy density away from the wells.

\begin{framed} 
\noindent{\bf{Nomenclature}}
\begin{align*}
&W^{1,\infty}(\Omega) &&\mbox{Space of continuous functions from $\om$ to $\R$ with bounded derivative.}\\
&W^{1,\infty}_0(\Omega) &&\mbox{$W^{1,\infty}(\Omega)$ with zero boundary conditions.}\\
&\rr &&\mbox{Space of $3\times 3$ matrices.}\\
&\so &&\mbox{Space of rotation matrices, i.e. $R\in \rr$ such that $R^TR=\id, \ \det R=1$.}\\
&\PP &&\mbox{Crystallographic point group of a cube.}
\end{align*}
\end{framed}
\section{Description of the Model}
Following \cite{BallFine} microstructures correspond to minimising sequences of the total free energy
\begin{equation}
 E(y):= \int_\Omega W_\theta(Dy(x)) dx, \label{FreeEnergy} \tag{E}
\end{equation}
where $\Omega \subset \mathbb{R}^3$ represents the undistorted austenite at temperature $\theta$ and $y \in W^{1,\infty}(\Omega)$ maps a point $x$ in the reference
configuration to a point $y(x)$ in the deformed configuration. In this continuum model the mesoscopic free energy density $W$ depends only on the deformation gradient $F=Dy(x)$ and temperature $\theta$. The deformation $y(x)=x, \, Dy(x)={\mathbb{I}}$ corresponds to pure austenite and
thus, since we are interested in modelling inclusions of martensite within austenite, $y$ has to satisfy the homogeneous
displacement boundary condition $y(x)=x$ on $\partial \Omega$. Furthermore $W$ respects frame indifference, i.e. $W(RA)=W(A) \, \fall R \in SO(3), \, A \in \mathbb{R}^{3\times 3}_+$ and the cubic symmetry of the austenite, i.e. $W(AQ)= W(A) \fall Q \in \mathcal{P}^{24}$. Let $K_m={\rm{SO(3)}} U_1 \mathcal{P}^{24}=\bigcup_{i=1}^3{\rm{SO(3)}} U_i$, $U_1=\operatorname{diag}(\eta_2,\eta_1,\eta_1),U_2=\operatorname{diag}(\eta_1,\eta_2,\eta_1),U_3=\operatorname{diag}(\eta_1,\eta_1,\eta_2)$ denote the set of martensitic wells. Then for a fixed temperature $\theta^*$ below the transition temperature we assume 
\begin{equation}
  W(R)=0 \mbox{ for } R\in \so, \  W(U)=-\tau \mbox{ for } U \in K_m,  \ W>-\tau \onn \mathbb{R}^{3\times 3} \backslash K_m. \label{W}
 \end{equation}
Clearly the energy functional \eqref{FreeEnergy} is bounded from below by $E(y) \geq - \tau \operatorname{Vol}(\Omega).$ Hence any sequence $\lbrace y_j \rbrace \subset W^{1,\infty}(\Omega)$ such that $E(y_j)\rightarrow-\tau \operatorname{Vol}(\Omega)$ corresponds to a microstructure that can only contain variants of martensite and thus can only exist if the volume does not change during transformation. If the transformation does involve a change in volume, i.e. $\det U_1 \neq 1$, it is of interest to find the minimal value of the free energy $E$. Since the energy is proportional to the volume of the inclusion the relevant quantity is the macroscopic free energy density, i.e. the average energy per volume element, denoted by $W^{qc}$. $W^{qc}({\mathbb{I}})$ is the free energy density of a region that has been macroscopically deformed by the identity map, i.e. that obeys the boundary condition $y(x)=x$. A characterisation (see  \cite{Dacqc}) of $W^{qc}$ is given by
 \begin{equation*} 
   W^{qc}({\mathbb{I}}) = \frac{1}{\operatorname{Vol}(\Omega)} \inf_{\phi \in W_0^{1,\infty}(\Omega)} \int_\Omega W ({\mathbb{I}} + D\phi(x))dx, \label{Wqc}
\end{equation*}
so that $\inf E(y)=\operatorname{Vol}(\Omega)W^{qc}({\mathbb{I}})$. In particular whenever $W^{qc}({\mathbb{I}}) <0$ there exists a deformation $y(x)=x+\phi(x)$ that is energetically preferable over the pure austenite.

\section{Interior Nucleation of Martensite}
In the following we will always assume $\det U_1 \neq 1$ so that we cannot have stress-free inclusions. The goal is to construct a test function $\phi$ such that the corresponding deformation $y(x)=x+\phi(x)$ has negative total free energy and is thus  energetically preferable over the pure austenite corresponding to $y(x)=x$. We will explore two different approaches. In the first approach we only allow deviations from the austenite and we show that the geometry of the inclusion plays a crucial r\^{o}le. In the second approach we only allow deviations from the martensite wells and use the principle of self-accommodation to derive our estimates. 

\subsection*{Deviations from the Austenite}
By \cite{BallFine} it is known that a fine mixture of two rank-one connected martensitic variants can form a stress-free planar interface with the austenite. That is, there exists a rank-one matrix $\mathbf{b} \tp \mathbf{m}, \ |\mathbf{m}|=1$ such that 
\begin{equation} 
 \lambda \hat R U_1 + (1-\lambda)\hat R RU_2 ={\mathbb{I}}+\mathbf{b} \tp \mathbf{m},\label{Comp}
\end{equation}
where $\hat R U_1, \hat R RU_2\in K_m$ such that $U_1-RU_2=\mathbf{a} \tp \mathbf{m},\ |\mathbf{n}|=1$ and $\lambda\in (0,1)$. We consider an inclusion with a core of pure martensite. Since the volumes do not agree we have a transition layer where the deformation gradients cannot be supported on the wells. We show that the energy penalty from the transition layer can be overcome by the gain in negative energy due to the presence of the martensite as long as the inclusion is flat and thin.

\begin{Def}{(Cylindrically symmetric inclusion with a simple laminate core)} \label{DefEllipsInc}\\
 Let $\Omega=\Omega_1 \cup \Omega_2$ be such that the core $\Omega_1$ consists of a simple laminate of martensite that is compatible with the austenite in the sense of \eqref{Comp}. By rotating the entire system we may assume without loss of generality  $\mathbf{m}=e_3$. We define the two regions as follows
\begin{equation*}  
  \Omega_1:=\lbrace x\in \mathbb{R}^3 | x_3 \in [0,\varphi(r)] \rbrace ,\, \Omega_2:=\lbrace x\in \mathbb{R}^3 | x_3 \in (\varphi(r),\sigma \varphi(r)] \rbrace,
\end{equation*}
where $r^2=x_1^2+x_2^2, \ \sigma>1$ and $0 \leq \varphi(r) \in W^{1,\infty}([0,\infty)), \, \lim_{r \rightarrow \infty }\varphi(r)=0$.
  \begin{figure}[h]
  \centering
  \includegraphics[width=10cm]{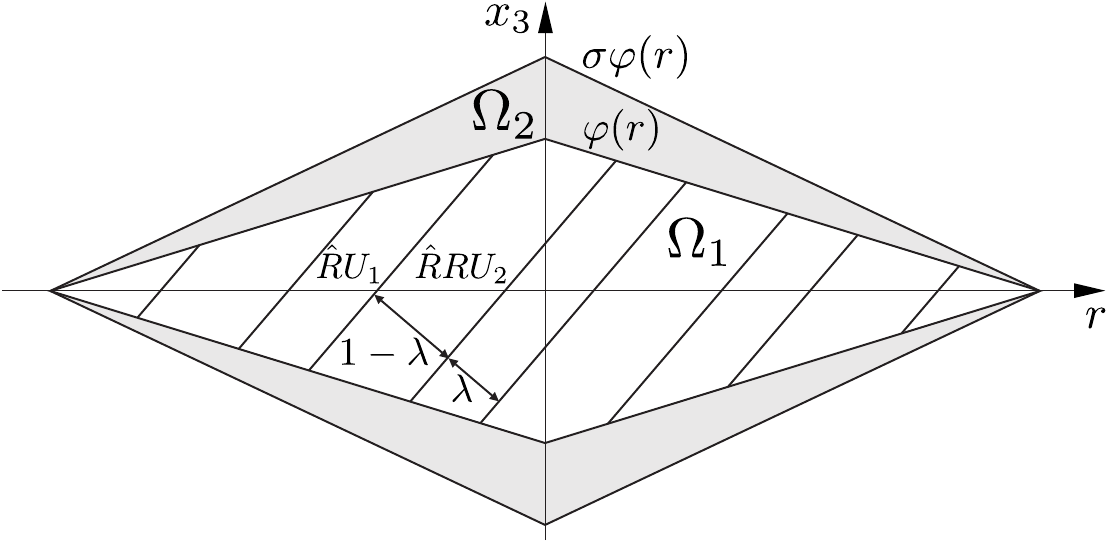}
  \caption{Cross section of the inclusion at $x_2=0$ in the $x_1-x_3$ plane}
  \end{figure}
Note that we assume symmetry with respect to the $x_1-x_2$ plane and only define the upper half of the inclusion. We further assume that $\operatorname{Vol}(\Omega_1)=2 \pi\int_0^\infty  r \varphi(r) dr=1$ and thus $\operatorname{Vol}(\Omega_2)=\sigma-1$. We define our test function $\phi\in W_0^{1,\infty}(\Omega)$ in the regions $\Omega_1$ and $\Omega_2$ separately. In $\Omega_1$ we set $ \phi_1(x)\equiv\left. \phi \right|_{\Omega_1} :=\mathbf{b}\cdot x_3$ and in $\Omega_2$ we interpolate linearly in $x_3$ direction from $\mathbf{b}\cdot \varphi(r)$ to $0$ by defining
\begin{align*}  
  \phi_2 (x) \equiv\left. \phi \right|_{\Omega_2}:=\frac{\mathbf{b}}{1-\sigma} \cdot \left(x_3-\sigma \varphi \left(r\right)\right),
\end{align*}
so that $\phi=\phi_1+\phi_2 \in W_0^{1,\infty}(\Omega)$. 
\end{Def}

\begin{Theorem} \label{ThSimple}
Let $W$ be as in \eqref{W} and assume the following upper bound on $W$ away from the austenite  
\begin{equation} 
 W({\mathbb{I}}+H) \leq C \left| H\right|^2. \label{WupperBound}
\end{equation}
Then the inclusion given by Definition \ref{DefEllipsInc} provides the following upper bound 
\begin{equation} 
 W^\qc({\mathbb{I}}) \leq -\tau \left(1+2\beta + 2 \sqrt{\beta+\beta^2}\right)^{-1}, \label{LamBound}
\end{equation}
where $\beta:=\frac{C}{\tau}|\mathbf{b}|^2$. 
\end{Theorem}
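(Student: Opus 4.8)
The plan is to evaluate $\int_\Omega W(\id+D\phi)\,dx$ on the test function of Definition~\ref{DefEllipsInc}, split the integral over the core $\Omega_1$ and the transition layer $\Omega_2$, and then optimise over the free parameters $\sigma$ and $\varphi$ via the relaxation characterisation of $W^{qc}$. The structure is a gain of $-\tau$ from the core set against a positive geometric penalty from the layer, and the whole theorem reduces to balancing the two.

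First I would treat the core. In $\Omega_1$ the macroscopic gradient is $\id+D\phi_1=\id+\mathbf{b}\tp e_3=\id+\mathbf{b}\tp\mathbf{m}$, which by the compatibility relation \eqref{Comp} is exactly the average $\lambda\hat R U_1+(1-\lambda)\hat R R U_2$ of two rank-one connected martensitic wells. Superimposing on $\phi_1$ a fine simple laminate oscillating between $\hat R U_1$ and $\hat R R U_2$ in the twinning direction $\mathbf{n}$, together with a thin boundary layer restoring the affine values $\mathbf{b}\,x_3$ on $\partial\Omega_1$, the gradient lies on $K_m$ (where $W=-\tau$ by \eqref{W}) off a set whose volume tends to $0$ as the laminate is refined. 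Hence $\int_{\Omega_1}W(\id+D\phi)\,dx$ can be driven arbitrarily close to $-\tau\,\Vol(\Omega_1)=-\tau$.

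Next I would estimate the transition layer. Writing $\phi_2=\mathbf{b}\,h$ with $h(x)=(x_3-\sigma\varphi(r))/(1-\sigma)$ one has $D\phi_2=\mathbf{b}\tp\D h$, and since $e_3\perp\D r$,
\[
|D\phi_2|^2=\frac{|\mathbf{b}|^2}{(\sigma-1)^2}\bigl(1+\sigma^2\varphi'(r)^2\bigr).
\]
Applying the growth bound \eqref{WupperBound} and integrating in cylindrical coordinates — the integrand is independent of $x_3$, so the $x_3$-integration contributes the layer thickness $(\sigma-1)\varphi(r)$ — and then invoking the normalisation $2\pi\int_0^\infty r\varphi\,dr=1$, yields
\[
\int_{\Omega_2}W(\id+D\phi)\,dx\le\frac{C|\mathbf{b}|^2}{\sigma-1}+\frac{2\pi C|\mathbf{b}|^2\sigma^2}{\sigma-1}\int_0^\infty r\varphi(r)\varphi'(r)^2\,dr.
\]
The second term is the genuinely geometric contribution: a rescaling $\varphi(r)\mapsto\eps^2\varphi(\eps r)$ preserves the volume constraint while multiplying $\int_0^\infty r\varphi(r)\varphi'(r)^2\,dr$ by $\eps^6$, so flattening and thinning the inclusion sends this term to $0$. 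This is precisely where the flat, thin geometry of the inclusion becomes essential.

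Combining the two contributions, dividing by $\Vol(\Omega)=\sigma$, and using the relaxation characterisation of $W^{qc}$ gives, in the limit,
\[
W^{qc}(\id)\le\frac{1}{\sigma}\Bigl(-\tau+\frac{C|\mathbf{b}|^2}{\sigma-1}\Bigr)=\tau\,\frac{\beta+1-\sigma}{\sigma(\sigma-1)},\qquad \beta=\tfrac{C}{\tau}|\mathbf{b}|^2.
\]
Finally I would minimise the right-hand side over $\sigma>1$; since it tends to $+\infty$ as $\sigma\to1^+$ and to $0^-$ as $\sigma\to\infty$, the minimiser is an interior critical point. Differentiation gives the critical equation $\sigma^2-2(1+\beta)\sigma+(1+\beta)=0$, whose admissible root is $\sigma^*=1+\beta+\sqrt{\beta+\beta^2}$; substituting $\beta+1-\sigma^*=-\sqrt{\beta+\beta^2}$ together with the factorisation $\sigma^*(\sigma^*-1)=\sqrt{\beta+\beta^2}\,\bigl(1+2\beta+2\sqrt{\beta+\beta^2}\bigr)$ collapses the bound to exactly \eqref{LamBound}. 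I expect the main obstacle to be the core construction — producing a genuine $W^{1,\infty}_0$ laminate whose energy actually attains $-\tau$ and which matches $\phi_2$ continuously across the interface $x_3=\varphi(r)$ — whereas the layer estimate and the final scalar optimisation are direct computations.
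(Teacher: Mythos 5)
Your proposal is correct and follows essentially the same route as the paper: the same test function from Definition~\ref{DefEllipsInc}, the same core/layer splitting with the $-\tau$ gain against the $C|\mathbf{b}|^2/(\sigma-1)$ penalty, the vanishing of the geometric term $\int_0^\infty r\varphi(\varphi')^2\,dr$, and the identical optimisation in $\sigma$ yielding \eqref{LamBound}. In fact you are somewhat more careful than the paper at two points it glosses over — realising the core energy $-\tau$ via a fine laminate with a vanishing boundary layer (the paper simply asserts the gradients are supported on the wells, although $\id+\mathbf{b}\tp\mathbf{m}$ itself is not), and making the limit $\gamma\to0$ explicit through the volume-preserving rescaling $\varphi(r)\mapsto\eps^2\varphi(\eps r)$.
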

\begin{proof}
In $\Omega_1$ the gradients of $\phi$ are supported on the martensite wells and therefore $E(\phi_1)= -\tau \operatorname{Vol}(\Omega_1)=-\tau$. In $\Omega_2$ we estimate the energy using the upper bound on $W({\mathbb{I}}+H)$. We have
$D\phi_2(x)=\frac{\mathbf{b}}{1-\sigma} \tp \nabla u(x)$ with $\nabla u (x) = \left(-\sigma \varphi'(r) \frac{x_1}{r},-\sigma \varphi'(r) \frac{x_2}{r},1\right)^T.$ Noting that $D\phi_2(x)$ is independent of $x_3$ and using the coarea formula (see \cite[Ch.~3]{Federer}) we arrive at
\begin{align*}  
 E(\phi_2) \leq \frac{C}{(\sigma-1)} |\mathbf{b}|^2 + \frac{\sigma^2}{(\sigma-1)} \gamma, 
\end{align*}
where $\gamma:={C \pi}|\mathbf{b}|^2 \int_0^\infty r \varphi(r)\varphi'(r)^2 dr$. Combining the two estimates we obtain
\begin{equation*} 
 E(\phi) \leq - \tau + \frac{C}{(\sigma-1)} |\mathbf{b}|^2 + \frac{\sigma^2}{(\sigma-1)} \gamma
\end{equation*}
and thus for $\gamma \rightarrow 0$ and since $\operatorname{Vol}(\Omega)=\sigma$ we have
\begin{equation*} 
 W^{qc}({\mathbb{I}}) \leq \tau \left(- \frac{1}{\sigma}+\frac{C}{\tau}\frac{1}{\sigma(\sigma-1)} |\mathbf{b}|^2 \right)= \tau \left(- \frac{1}{\sigma}+\frac{\beta}{\sigma(\sigma-1)} \right),
\end{equation*}
where $\beta:=\frac{C}{\tau}|\mathbf{b}|^2$. Minimisation with respect to $\sigma$ gives the desired bound.
\end{proof}
\begin{Remark}
 We note that we derived this upper bound by taking the limit $\gamma\rightarrow0$ and thus $\int_0^\infty r \varphi(r)\varphi'(r)^2 dr\rightarrow0$ which corresponds to $\varphi'(r) \rightarrow 0$ as the volume is fixed. Thus within the class of inclusions under consideration the lowest energy is achieved by a very flat and thin inclusion. 
\end{Remark} 
We can derive the same upper bound on $W^{qc}({\mathbb{I}})$ by estimating $W^{rc}({\mathbb{I}})$, where $W^{rc}$ denotes the rank-one convex envelope of $W$ (see e.g. \cite{Daco}). The proof uses the general result that $W^{qc} \leq W^{rc}$ and is simpler than the proof of Theorem \ref{ThSimple} but gives no information on the structure of the inclusion. 
\begin{Lemma}
Under the same assumptions as in Theorem \ref{ThSimple} an upper bound on $W^{rc}({\mathbb{I}})$ and thus on $W^{qc}({\mathbb{I}})$ is given by
\begin{equation*} 
 W^{rc}({\mathbb{I}}) \leq -\tau \left(1+2\beta + 2 \sqrt{\beta+\beta^2}\right)^{-1},
\end{equation*}
where $\beta := \frac{C}{\tau} |\mathbf{b}|^2$.
\end{Lemma}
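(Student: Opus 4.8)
The plan is to exploit the standard chain of inequalities $W^{qc}\le W^{rc}$ together with the fact that the rank-one convex envelope is, by definition, the largest rank-one convex function below $W$. Since rank-one convexity only requires checking convexity along rank-one segments $t\mapsto A+t\,\aaa\tp\nn$, I can obtain an upper bound on $W^{rc}(\id)$ purely by writing $\id$ as a convex combination of two matrices that are connected by a sequence of rank-one directions, and then applying the defining inequality of rank-one convexity along each segment. Concretely, I would iterate the rank-one convexity inequality twice to produce a "second-order lamination" bound on $W^{rc}(\id)$.

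First I would recall the compatibility relation \eqref{Comp}: there is a rank-one matrix $\bb\tp\mm$ and $\la\in(0,1)$ with $\la\hat R U_1+(1-\la)\hat R RU_2=\id+\bb\tp\mm$, where the two martensitic matrices differ by the rank-one matrix $U_1-RU_2=\aaa\tp\nn$ (up to the rotation $\hat R$). Because $\hat R U_1$ and $\hat R RU_2$ lie on the wells, $W$ takes the value $-\tau$ on both, so rank-one convexity along the segment joining them gives $W^{rc}(\id+\bb\tp\mm)\le -\tau$. This plays the role of the pure-martensite core $\Omega_1$ in Theorem \ref{ThSimple}. Next I would introduce a one-parameter family connecting $\id$ to the laminate matrix $\id+\bb\tp\mm$ along the rank-one direction $\bb\tp\mm$: for $\sigma>1$ write
\begin{equation*}
 \id=\tfrac{1}{\sigma}\bigl(\id+\sigma\,\tfrac{\sigma-1}{\sigma}\cdot 0\bigr)\quad\text{more usefully}\quad \id=\mu\,(\id+\bb\tp\mm)+(1-\mu)\,(\id+t\,\bb\tp\mm),
\end{equation*}
choosing the weight $\mu=1/\sigma$ and the second node $\id+t\,\bb\tp\mm$ with $t<0$ so that the convex combination returns exactly $\id$. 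This forces $t=-\tfrac{1}{\sigma-1}$, which is precisely the interpolation slope $\tfrac{1}{1-\sigma}$ appearing in the definition of $\phi_2$.

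Applying rank-one convexity along this second segment then yields
\begin{equation*}
 W^{rc}(\id)\le \tfrac{1}{\sigma}W^{rc}(\id+\bb\tp\mm)+\bigl(1-\tfrac{1}{\sigma}\bigr)W\bigl(\id+t\,\bb\tp\mm\bigr)\le -\tfrac{\tau}{\sigma}+\tfrac{\sigma-1}{\sigma}\,W\bigl(\id-\tfrac{1}{\sigma-1}\bb\tp\mm\bigr).
\end{equation*}
Now I invoke the growth bound \eqref{WupperBound}, $W(\id+H)\le C|H|^2$, with $H=-\tfrac{1}{\sigma-1}\bb\tp\mm$, so that $|H|^2=\tfrac{1}{(\sigma-1)^2}|\bb|^2$ (using $|\mm|=1$). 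This gives $W^{rc}(\id)\le -\tfrac{\tau}{\sigma}+\tfrac{\sigma-1}{\sigma}\cdot\tfrac{C|\bb|^2}{(\sigma-1)^2}=\tau\bigl(-\tfrac{1}{\sigma}+\tfrac{\beta}{\sigma(\sigma-1)}\bigr)$ with $\beta=\tfrac{C}{\tau}|\bb|^2$, which is exactly the expression obtained in the proof of Theorem \ref{ThSimple} in the limit $\gamma\to 0$. Minimising the right-hand side over $\sigma>1$ then produces the stated bound $-\tau\bigl(1+2\beta+2\sqrt{\beta+\beta^2}\bigr)^{-1}$, and $W^{qc}\le W^{rc}$ finishes the argument.

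The computation is almost entirely routine once the two rank-one laminations are set up, so \textbf{the main obstacle is conceptual bookkeeping rather than analysis}: I must verify that the intermediate matrix $\id+\bb\tp\mm$ genuinely lies on the rank-one convex hull of the wells (guaranteed by \eqref{Comp}) and that both lamination directions really are rank-one in the same direction $\bb\tp\mm$, so that the two applications of the rank-one convexity inequality compose legitimately. The one-variable minimisation over $\sigma$ is identical to the one already carried out in Theorem \ref{ThSimple}; since the resulting upper bound coincides, no separate optimisation is needed and one may simply cite that computation.
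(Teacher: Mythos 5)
Your proof is correct and is essentially the paper's own argument: the paper performs the identical double lamination, first bounding $W^{rc}({\mathbb{I}}+\mathbf{b}\otimes\mathbf{m})\leq-\tau$ via the two rank-one connected wells from \eqref{Comp}, then splitting ${\mathbb{I}}=\lambda({\mathbb{I}}+\mathbf{b}\otimes\mathbf{m})+(1-\lambda)\bigl({\mathbb{I}}+\tfrac{\lambda}{\lambda-1}\mathbf{b}\otimes\mathbf{m}\bigr)$ and applying \eqref{WupperBound}, which is exactly your decomposition under the substitution $\lambda=1/\sigma$, followed by the same one-variable minimisation. The only cosmetic difference is that your parametrisation by $\sigma$ keeps the free lamination weight notationally distinct from the crystallographic volume fraction $\lambda$ of \eqref{Comp} (which the paper conflates), and note that your side remark about both segments needing the same rank-one direction is unnecessary: iterated lamination bounds do not require the directions to agree, and indeed the first segment here is along $\mathbf{a}\otimes\mathbf{n}$, not $\mathbf{b}\otimes\mathbf{m}$.
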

\begin{proof}
An upper bound on $W^{rc}$ (see \cite[Thm.~6.10]{Daco}) is in its simplest form given by 
\begin{equation} \label{wrc}
W^{rc}(A) \leq \inf \lbrace \lambda W(A_1)+(1-\lambda) W(A_2): \lambda A_1+(1-\lambda)A_2=A, \rk(A_1-A_2)\leq 1 \rbrace.
\end{equation}
The estimate can be refined by applying the same estimate to each of the terms $W(A_i),i=1,2$ on the RHS. We now decompose ${\mathbb{I}} = \lambda ({\mathbb{I}}+\mathbf{b} \tp \mathbf{m})+ (1-\lambda) ({\mathbb{I}} + \eps b \tp m), \ \eps = \frac{\lambda}{\lambda-1}$ and $ {\mathbb{I}}+\mathbf{b} \tp \mathbf{m}= \lambda R_1U_1+(1-\lambda) R_2 U_2, \ R_1,R_2\in {\rm{SO(3)}}$ such that $\rk (R_1U_1-R_2U_2)=1$. 
Then by using \eqref{WupperBound} and \eqref{wrc} we have
\begin{align*} 
W^{rc}({\mathbb{I}})\leq \lambda W^{rc}({\mathbb{I}}+\mathbf{b} \tp \mathbf{m})+(1-\lambda)W({\mathbb{I}}+\frac{\lambda}{\lambda-1} \mathbf{b} \tp \mathbf{m}) 
\leq  \tau\left(- \lambda + \beta \frac{\lambda^2}{1-\lambda}\right) 
\end{align*}
with $\beta := \frac{C}{\tau} |\mathbf{b}|^2$. Minimisation with respect to $\lambda$ gives the desired estimate.
\end{proof}

\subsection*{Deviations from the Martensite}
In many materials $\det U_1\approx 1$, so that matrices in a relatively small neighbourhood of the martensite wells have unit determinant.
Allowing deviations from the martensite wells we now consider an inclusion with a core of variants that are close to the martensite
wells and allow self-accommodation and thus do not require a transition layer. 

\begin{Theorem}\label{TheoremUpper}
 Let $|\det U^{-1/3} -1|\ll |\eta_1-1|, |\eta_2-1|, \, U \in K_m=\bigcup_{i=1}^3{\rm{SO(3)}} U_i$ and let $W$ be as \eqref{W} with
 \begin{equation} 
 W(V) \leq W(U)+{C} \dist^2(V, K_m) \label{MartExp}
\end{equation}
for all $V\in \mathbb{R}^{3\times 3}$ such that $\dist(V,K_m)\leq |1-\det U^{-1/3}|$. Then
\begin{equation} 
  W^{qc}({\mathbb{I}})\leq W^{rc}({\mathbb{I}})\leq-\tau+C(\det U^{-1/3}-1)^2 |U|^2.\label{Upperbound}
\end{equation}
\end{Theorem}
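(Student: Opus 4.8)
The plan is to prove the two inequalities in \eqref{Upperbound} separately: the first, $W^{qc}(\id)\le W^{rc}(\id)$, is the general ordering of semiconvex envelopes and may be quoted directly (it was already used in the previous lemma). The work is therefore to bound $W^{rc}(\id)$, and I would do this exactly as in the lemma above, by exhibiting a rank-one convex combination with barycentre $\id$ built from deformations that sit close to the martensite wells, and then estimating $W$ on each of them via the growth hypothesis \eqref{MartExp}.

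First I would pass to the rescaled, volume-preserving wells. Setting $c:=(\det U)^{-1/3}$ and $\tilde U_i := c\,U_i$ we have $\det \tilde U_i=1$ for $i=1,2,3$, so the rescaled wells $\tilde K_m:=\bigcup_{i=1}^3 \so\,\tilde U_i$ enclose the same volume as the austenite $\id$. Since martensite and austenite now have equal volume, the principle of self-accommodation (\cite{Bhaself}) applies and provides a laminate — a finite-order rank-one convex combination, realised for cubic-to-tetragonal by the explicit self-accommodating twins-within-twins — with gradients $\tilde V_k=R_k\tilde U_{i(k)}\in\tilde K_m$, $R_k\in\so$, and weights $\lambda_k\ge 0$, $\sum_k\lambda_k=1$, such that $\sum_k\lambda_k\tilde V_k=\id$.

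Next I would estimate the distance of these values from the true wells $K_m$. Because the Frobenius distance is invariant under left multiplication by $\so$ and the $U_i$ are permutations of one another (so that $|U_i|=:|U|$ is common to all three),
\[
\dist(\tilde V_k, K_m)=\dist(\tilde U_{i(k)},K_m)\le |\tilde U_{i(k)}-U_{i(k)}|=|c-1|\,|U|=|\det U^{-1/3}-1|\,|U|.
\]
Provided the volume change is small enough that each $\tilde V_k$ remains inside the region $\dist(\cdot,K_m)\le|1-\det U^{-1/3}|$ where \eqref{MartExp} is valid — which is secured by the smallness assumption $|\det U^{-1/3}-1|\ll|\eta_1-1|,|\eta_2-1|$ — the growth bound \eqref{MartExp} gives $W(\tilde V_k)\le W(U)+C\dist^2(\tilde V_k,K_m)\le -\tau+C(\det U^{-1/3}-1)^2|U|^2$. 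Finally, using the laminate refinement of \eqref{wrc} (see \cite[Thm.~6.10]{Daco}), namely $W^{rc}(\id)\le\sum_k\lambda_k W(\tilde V_k)$, and bounding each term uniformly while $\sum_k\lambda_k=1$, I obtain $W^{rc}(\id)\le -\tau+C(\det U^{-1/3}-1)^2|U|^2$, which together with $W^{qc}\le W^{rc}$ yields \eqref{Upperbound}.

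The main obstacle is the self-accommodation step. One must guarantee that the identity is reached by a laminate (exactly, or as the barycentre of a sequence of laminates of bounded energy), rather than only by a more general gradient Young measure, so that the rank-one bound \eqref{wrc} genuinely applies; this is precisely the content of \cite{Bhaself}, and if $\id$ is only attained in the limit then the uniform bound $-\tau+C(\det U^{-1/3}-1)^2|U|^2$ on every term passes to the $\liminf$ and the conclusion is unchanged. The only other point requiring care is the verification that the rescaling keeps all variants within the validity domain of \eqref{MartExp}, which is a direct consequence of the smallness hypothesis.
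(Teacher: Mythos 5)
Your proposal is correct and takes essentially the same route as the paper: rescale the wells to $\tilde U_i = (\det U)^{-1/3} U_i$ so that they become volume-preserving, invoke Bhattacharya's self-accommodation result \cite{Bhaself} to realise $\id$ as the macroscopic gradient of a laminate supported on the rescaled wells, bound $W$ on those wells via \eqref{MartExp} using $\dist(\tilde U_i,K_m)\le |\det U^{-1/3}-1|\,|U|$, and conclude through the iterated rank-one bound \eqref{wrc} together with $W^{qc}\le W^{rc}$. Your write-up is in fact somewhat more careful than the paper's proof, which states these steps without the explicit distance estimate or the limiting argument for the case where $\id$ is only the barycentre of a sequence of laminates.
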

\begin{proof}
By Bhattacharya's construction (see \cite{Bhaself}) in a cubic-to-tetragonal phase transformation there exist laminates with macroscopic deformation gradient ${\mathbb{I}}$ iff $\det  U_1=1$. Therefore let us set $\tilde U_i := (\det U)^{-1/3} U_i$ so that $\det \tilde U_i=1$. By \eqref{MartExp} 
we have $$ W(\tilde U_i) \leq - \tau + C \left( 1-\det U^{-1/3}\right)^2 |U|^2$$ and thus using \eqref{wrc} we can estimate
\begin{align*} 
 W^{qc}({\mathbb{I}}) \leq W^{rc}({\mathbb{I}}) \leq W(U)+C \left( 1-\det U^{-1/3}\right)^2 |U|^2. 
\end{align*}
\end{proof}
\begin{Remark}
We note that unlike Theorem \ref{ThSimple} this construction is only preferable over the pure austenite for $\tau$ sufficiently large, corresponding to a temperature $\theta^*$ sufficiently far below the transition temperature $\theta_c$, or for $\det U$ sufficiently close to $1$. A comparison with the estimates in Theorem \ref{ThSimple} reveals that \eqref{Upperbound} provides a better bound than \eqref{LamBound} for small volume changes and a worse bound on $W^{qc}({\mathbb{I}})$ for large volume changes. For $\det U \rightarrow 1$ we recover the case of self-accommodation, i.e. 
 \begin{equation*} 
  -\tau \leq W^{qc}({\mathbb{I}})\leq -\tau +\lim_{\det U \rightarrow 1}C(\det U^{-1/3}-1)^2 |U|^2=-\tau.
 \end{equation*}
\end{Remark}
\begin{Remark}
For simplicity the present paper only considered a cubic-to-tetragonal phase transformation. However, in the first part the only assumption on the material was the existence of an interface between a microstructure of martensite and austenite. In the second part, the only assumption on the material was that it almost satisfies a criterion of self-accommodation (e.g. in a cubic-to-* transformation this holds iff $\det U_i \approx 1$, cf. \cite{Bhaself}). Thus as long as either or both assumptions are satisfied all respective conclusions remain the same. 
\end{Remark}

\section{Conclusions}
Without making any a priori assumptions, apart from cylindrical symmetry, on the shape of an inclusion of martensite within austenite we have shown that - within the class of inclusions under consideration - the lowest energy is achieved for a flat and thin shape as observed in experiments. We showed that $W^{qc}({\mathbb{I}})$ is always negative, so that within the model, it is always energetically preferable to nucleate martensite within austenite. Further by comparing two different types of nucleation in Theorem \ref{ThSimple} and Theorem \ref{TheoremUpper} we could show that for materials that almost fulfil the criteria for self-accommodation, i.e. $\det U_1 \approx 1$, it is energetically preferable to form a self-accommodating inclusion rather than an inclusion with a core of a simple laminate. 

\section*{Acknowledgements}
The research leading to these results has received funding from the European Research Council under the European Union's Seventh Framework Programme (FP7/2007-2013) / ERC grant agreement n$^{\circ} \, 291053$.


\begin{thebibliography}{DSJS11}

\bibitem[Bal13]{BallSummer}
J.M. Ball.
\newblock Microstructure: Genesis and morphology.
\newblock 2013 CNA Summer School, Carnegie Mellon University, June 2013.
\newblock Lecture Notes.

\bibitem[Bha92]{Bhaself}
K.~Bhattacharya.
\newblock Self-accommodation in martensite.
\newblock {\em Archive for Rational Mechanics and Analysis}, 120(3):201--244,
  1992.

\bibitem[BJ87]{BallFine}
J.M. Ball and R.D. James.
\newblock Fine phase mixtures as minimizers of energy.
\newblock {\em Archive for Rational Mechanics and Analysis}, 100(1):13--52,
  1987.

\bibitem[BKS11]{Kostas}
J.~M. Ball, K.~Koumatos, and H.~Seiner.
\newblock Nucleation of austenite in mechanically stabilized martensite by
  localized heating.
\newblock {\em Journal of Alloys and Compounds}, 2011.

\bibitem[BM53]{Bunshah}
R.F. Bunshah and R.F. Mehl.
\newblock Rate of propagation of martensite.
\newblock {\em Trans. AIME}, 197:1251, 1953.

\bibitem[Chr02]{Christian}
J.W. Christian.
\newblock {\em The Theory of Transformations in Metals and Alloys (Part I+
  II)}.
\newblock Newnes, 2002.

\bibitem[Coo73]{Cook}
H.E. Cook.
\newblock A lattice model of structural and dislocation transformations.
\newblock {\em Acta Metallurgica}, 21(10):1431--1444, 1973.

\bibitem[Dac82]{Dacqc}
B.~Dacorogna.
\newblock Quasiconvexity and relaxation of nonconvex problems in the calculus
  of variations.
\newblock {\em Journal of Functional Analysis}, 46(1):102--118, 1982.

\bibitem[Dac07]{Daco}
B.~Dacorogna.
\newblock {\em Direct methods in the Calculus of Variations}.
\newblock Applied mathematical sciences. Springer, 2007.

\bibitem[DSJS11]{James}
R.~Delville, H.~Shi, R.D. James, and D.~Schryvers.
\newblock {Special microstructures and twin features in Ti50Ni50-X(Pd, Au)X at
  small hysteresis}.
\newblock {\em Solid State Phenomena}, 172:105--110, 2011.

\bibitem[Fed96]{Federer}
H.~Federer.
\newblock {\em Geometric Measure Theory}.
\newblock Classics in Mathematics. Springer Berlin Heidelberg, 1996.

\bibitem[KPG89]{Krauss}
W.~Krauss, S.K. Pabi, and H.~Gleiter.
\newblock On the mechanism of martensite nucleation.
\newblock {\em Acta Metallurgica}, 37(1):25--30, 1989.

\bibitem[OC79]{OlsonDislo}
G.B. Olson and M.~Cohen.
\newblock Interphase-boundary dislocations and the concept of coherency.
\newblock {\em Acta Metallurgica}, 27(12):1907--1918, 1979.

\bibitem[OR92]{OlsonMartNuc}
G.B. Olson and A.L. Roitburd.
\newblock Martensitic nucleation.
\newblock {\em ASM International, Martensite(USA), 1992}, pages 150--174, 1992.

\bibitem[SC79]{Salz}
R.J. Salzbrenner and M.~Cohen.
\newblock On the thermodynamics of thermoelastic martensitic transformations.
\newblock {\em Acta Metallurgica}, 27(5):739--748, 1979.

\end{thebibliography}
\bibliographystyle{alpha}

\end{document}